\documentclass[11pt]{article}
\usepackage[text={7.2in,9.6in}]{geometry} % See geometry.pdf to learn the layout options. There are lots.
\geometry{letterpaper}                    % ... or a4paper or a5paper or ... 
\usepackage{graphicx}
\usepackage{amsmath}
\usepackage{amssymb}
\usepackage{epstopdf}
\usepackage{graphicx}
\usepackage{booktabs}
\usepackage{fancyhdr}
\graphicspath{ {images/} }
\usepackage{amsthm} %%% gives proof environment

\usepackage{enumerate}
\usepackage{setspace}	
	%% options \doublespacing, \singlespacing, \onehalfspacing for double, single, and one-and-a-half line spacing

%\usepackage{pxfonts} %% for \varprod 
\usepackage[all]{xy}
\DeclareGraphicsRule{.tif}{png}{.png}{`convert #1 `dirname #1`/`basename #1 .tif`.png}

\pagestyle{empty}
   %% no page numbers (see C.4.2, or AMS Notices, May/June 1989, 545)

\newtheorem{lemma}{Lemma}

\pagestyle{fancy}
% Clear the header and footer
\fancyhead{}
\fancyfoot{}
% Set the right side of the footer to be the page number
\setlength{\headheight}{30pt} 
\fancyhead[R]{\thepage}
\fancyhead[L]{}

%===========================================

 % whole numbers: {0,1,2,...}

%===========================================

\begin{document}
\onehalfspacing
\begin{center}
{\LARGE \textbf{Two Error Bounds of Imperfect Binary Search}}
\vspace{5mm}

{\Large Haoze Wu}
\end{center}
\noindent

% Section 1

Suppose we know that an object is in a sorted table and we want to determine the index of that object. To achieve this goal we could perform a binary search\cite{Knuth:1998:ACP:280635}. However, suppose it is time-consuming to determine the relative position of that object to any other objects in the table. In this scenario, we might want to resort to an incomplete solution: we could device an algorithm that quickly predicts the result of comparing two objects, and replace the actual comparison with this algorithm during a binary search.

The question then is how far away are the results yielded by the imperfect binary search from the correct answers. We present two lemmas about the expected error of a imperfect binary search that  goes the wrong direction with a fixed probability.

\begin{lemma}
The expected error for a binary search that goes the wrong direction with probability $\epsilon$  on a size $n$ table is upper-bounded by $\epsilon * n$.
\end{lemma}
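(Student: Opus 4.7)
The plan is to prove this by strong induction on $n$, following the recursive structure of binary search. Let $E(n)$ denote the expected error of the imperfect search on a size-$n$ table. The base case $E(1) = 0$ is immediate. For $n \ge 2$, I would condition on the outcome of the first comparison, which splits the table into two halves of sizes $\lceil n/2 \rceil$ and $\lfloor n/2 \rfloor$. With probability $1 - \epsilon$ the search descends into the half containing the target, contributing expected error at most $\epsilon \lceil n/2 \rceil$ by the inductive hypothesis. With probability $\epsilon$ the search descends into the wrong half; I would bound the error in that case as the sum of an immediate displacement term (at most $\lceil n/2 \rceil$, the maximum distance from the target to the nearest boundary of the wrong half) plus a drift inside the wrong half.

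Combining these two cases gives a recurrence of the rough form $E(n) \le (1-\epsilon)\epsilon \lceil n/2 \rceil + \epsilon\bigl(\lceil n/2 \rceil + F(\lfloor n/2 \rfloor)\bigr)$, where $F(s)$ is an auxiliary quantity bounding the expected distance from the near boundary of a size-$s$ sub-table to the returned index, conditional on the target lying strictly outside that sub-table. A parallel induction would be used to show $F(s) \le \epsilon s$, exploiting the observation that once the search is outside the target's sub-table the "correct" comparison direction becomes deterministic, so only wrong comparisons contribute drift. Unrolling the main recurrence then telescopes into a geometric sum $\epsilon(n/2 + n/4 + \cdots) \le \epsilon n$, giving the stated bound.

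The main obstacle I expect is the wrong-sub-table analysis that supports $F$. Once the search enters a sub-table not containing the target, further wrong comparisons compound the drift, and a naive bound would overshoot $\epsilon n$; the induction has to exploit the geometric shrinkage of the intervals so that each successive wrong turn contributes only a controlled fraction of the current interval size. Setting up the invariant for $F$ so that its induction and the induction for $E$ close simultaneously, and handling the ceiling/floor bookkeeping so that the telescoping sum still collapses to $\epsilon n$, is the main technical hurdle.
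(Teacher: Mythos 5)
Your proposal is correct and follows essentially the same route as the paper: condition on the first comparison, bound the wrong-direction case by a displacement of at most $n/2$ plus the drift of the subsequent search in the wrong half, and let the resulting recurrence collapse into the geometric sum $\epsilon(\frac{n}{2}+\frac{n}{4}+\cdots)\le\epsilon n$. Your auxiliary quantity $F$ just makes explicit a step the paper elides --- the paper silently reuses $a(\frac{n}{2})$ to bound the drift inside the wrong half --- and, as you observe, $F$ satisfies the same recurrence $F(s)\le F(\frac{s}{2})+\epsilon\frac{s}{2}$, so both inductions close and your worry about compounding wrong turns is already handled by the geometric shrinkage.
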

\begin{proof}
Let $a(n)$ be the expected error for the binary search method on a table with size $n$. At each iteration, the binary search method goes the wrong direction with probability $\epsilon$. 
	
	Consider the first iteration, where the binary search method decides whether to set $\frac{n}{2}$ as the new upper-bound or the new lower-bound. With probability $1-\epsilon$ the decision is correct. In this case, the expected error reduce to $a(\frac{n}{2})$. On the other hand, with probability $\epsilon$, the binary search method makes the wrong decision. In the case, the expected error is at most $\frac{n}{2} + a(\frac{n}{2})$. Therefore, the upperbound of the expected error, $a(n)$, is equal to  $(1-\epsilon) * a(\frac{n}{2}) + \epsilon * (\frac{n}{2} + a(\frac{n}{2})) = a(\frac{n}{2}) + \epsilon * \frac{n}{2}$.
	
	Similarly, $a(\frac{n}{2})  = a(\frac{n}{4}) + \epsilon * \frac{n}{4}$, and so forth. 
	
	Therefore, 
	
	\centering
	\[\arraycolsep=1.4pt\def\arraystretch{2.2}
	\begin{array} {lcl} a(n) &=&  \epsilon * \frac{n}{2} +  \epsilon * \frac{n}{4} +  \epsilon * \frac{n}{8} +  \epsilon * \frac{n}{16} + ... \\ & = & \epsilon*(\frac{n}{2} + \frac{n}{4}+ \frac{n}{8} + \frac{n}{16} + ...)\\ &=&  \epsilon*n \end{array}.
		\]
\end{proof}

\begin{lemma}
The expected error for a binary search that goes the wrong direction with probability $\epsilon$  on a size $n$ table on average is $\frac{\epsilon n (0.5 + \epsilon)}{1 + \epsilon}$.
\end{lemma}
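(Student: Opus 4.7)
The plan is to set up a recurrence for the average expected error $e(n)$, this time averaging both over the algorithm's coin flips and over the uniform distribution of the target position in $\{1,\dots,n\}$. As in Lemma 1, I condition on the first comparison at the midpoint $n/2$. By symmetry, assume the target lies in the lower half; with probability $1-\epsilon$ the algorithm recurses correctly on a sub-table of size $n/2$ with a uniform target, contributing $e(n/2)$ to the total. With probability $\epsilon$ the algorithm descends into the wrong half, and here the argument must deviate from Lemma 1: instead of bounding this contribution by $\tfrac{n}{2}+a(\tfrac{n}{2})$, I compute its exact expectation.

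To handle the wrong-branch case I introduce an auxiliary quantity $L(k)$: the expected distance from the left edge to the index ultimately reported when the imperfect binary search runs on a sub-range of size $k$ whose target lies strictly to the left of that sub-range. A halving argument parallel to the one used for $a(n)$ in Lemma 1 gives $L(k)=L(k/2)+\epsilon k/2$ with base case $L(1)=1$, whence $L(k)=1+\epsilon(k-1)$. The true target, conditioned on being in the lower half, is uniform there with mean $\approx n/4$, so the expected error contributed by the wrong-branch case is $\tfrac{n}{2}+L(\tfrac{n}{2})-\tfrac{n}{4}\approx \tfrac{n}{4}+\epsilon\tfrac{n}{2}$ up to lower-order terms. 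By symmetry the target-in-upper-half case contributes the same.

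Substituting back produces the recurrence $e(n)=(1-\epsilon)\,e(n/2)+\epsilon\bigl(\tfrac{n}{4}+\epsilon\tfrac{n}{2}\bigr)$. I would solve this with a linear ansatz $e(n)=Cn$, which is justified because the recurrence is scale-invariant to leading order. Plugging in yields $C(1+\epsilon)/2=\epsilon(1+2\epsilon)/4$, and rearranging gives $C=\epsilon(0.5+\epsilon)/(1+\epsilon)$, matching the claimed formula.

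The main obstacle I expect is computing the wrong-branch contribution correctly. It is tempting just to recycle Lemma 1's $\tfrac{n}{2}+a(\tfrac{n}{2})$ estimate, but that is an upper bound that implicitly pins the target to the far edge of its half; here I need the honest average both over the target's uniform position within the correct half and over the reported index's distribution within the wrong half. Introducing the separate $L(k)$ recurrence is what cleanly handles this bookkeeping and converts Lemma 1's inequality into the exact identity needed for Lemma 2.
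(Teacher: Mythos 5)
Your proposal is correct and takes essentially the same route as the paper: both condition on the first comparison and arrive at the identical recurrence $b(n)=(1-\epsilon)\,b(n/2)+\epsilon\bigl(\tfrac{n}{4}+\epsilon\tfrac{n}{2}\bigr)$, with the wrong-branch term split into the expected distance $\tfrac{n}{4}$ from the target to the midpoint plus an $\epsilon\tfrac{n}{2}$ contribution from the subsequent search in the wrong half. The only differences are cosmetic: you derive that $\epsilon\tfrac{n}{2}$ term via the auxiliary recurrence $L(k)$ (which makes explicit why Lemma 1's upper bound $a(n/2)$, which the paper simply reuses, is actually exact when the target lies outside the searched range), and you solve the recurrence by a linear ansatz rather than by unrolling the geometric series.
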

	\begin{proof}
	Let $b(n)$ be the average expected error for the binary search method on a table with size $n$. Let $a(n)$ be the upper-bound of the expected error for the binary search method on a table with size $n$. From proposition 1, $a(n) = \epsilon * n$. At each iteration, the binary search method goes the wrong direction with probability $\epsilon$. 

	Consider the first iteration, where the binary search method decides whether to set $\frac{n}{2}$ as the new upper-bound or the new lower-bound. With probability $1-\epsilon$ the decision is correct. In this case, the expected error reduce to $e(\frac{n}{2})$. If
	
	On the other hand, with probability $\epsilon$, the binary search method makes the wrong decision. In the case, the expected error guarantees $\frac{n}{4}$ error. Additionally, the later binary search would produce an error of $a(\frac{n}{2})$. That is, in this case, the expected error is $\frac{n}{4} + a(\frac{n}{2}) = \frac{n}{4} + \epsilon * \frac{n}{2}$. Therefore, the expected error, $b(n)$, is equal to  $\epsilon * (\frac{n}{4} + \epsilon * \frac{n}{2}) + (1-\epsilon) * b(\frac{n}{2})$.
	
	Similarly, $b(\frac{n}{2})  = \epsilon * (\frac{n}{8} + \epsilon * \frac{n}{4}) + (1-\epsilon) * b(\frac{n}{4})$, and so forth. 
	
	Therefore, 
	
	\centering
	\[\arraycolsep=1.4pt\def\arraystretch{2.2}
	\begin{array} {lcl} b(n) &=&  \epsilon * (\frac{n}{4} + \epsilon * \frac{n}{2}) + (1-\epsilon)  [\epsilon * (\frac{n}{8} + \epsilon * \frac{n}{4}) + (1-\epsilon)  [\epsilon * (\frac{n}{16} + \epsilon * \frac{n}{8}) + (1-\epsilon) *...]] \\ & = & \epsilon n(\frac{1}{4} + \frac{1-\epsilon}{8} + \frac{(1-\epsilon)^2}{16} + ...) + \epsilon^2n(\frac{1}{2} + \frac{1-\epsilon}{4} + \frac{(1-\epsilon)^2}{8} + ... ) \\ &=& (\frac{\epsilon n}{4} + \frac{\epsilon^2 n}{2}) \sum\limits_{i=0}^\infty (\frac{1-\epsilon}{2})^i \\ &=& (\frac{\epsilon n}{4} + \frac{\epsilon^2 n}{2}) * \frac{1} {1-\frac{1-\epsilon}{2}} \\ &=&  \frac{1}{2}\epsilon n (\frac{1}{2} + \epsilon) * \frac{2}{1 + \epsilon} \\ &=& \frac{\epsilon n (0.5 + \epsilon)}{1 + \epsilon}\end{array}.
	\]
	\end{proof}
\bibliographystyle{abbrv}
\bibliography{sigproc}

\end{document}